\pgfplotsset{compat=1.14}
\definecolor{mygray}{gray}{0.8}
\begin{document}

\title{Weighted Burrows-Wheeler Compression} 

\titlerunning{Weighted Burrows-Wheeler Compression} 

\author{Aharon Fruchtman \inst{1}
\and
Yoav Gross \inst{1}
\and\\
Shmuel T.\ Klein\inst{2}
\orcidID{0000-0002-9478-3303} 
\and 
Dana Shapira \inst{1} \orcidID{0000-0002-2320-9064}}

\authorrunning{A.\ Fruchtman et al.}

\institute{Computer Science Department, Ariel University,  Israel 
\email{\{aralef,yodgimmel\}@gmail.com, shapird@g.ariel.ac.il}\\
\and
Computer Science Department, Bar Ilan University,  Israel\\
\email{tomi@cs.biu.ac.il}}

\maketitle


\begin{abstract}
    
A weight based dynamic compression method has  recently been proposed, which is especially suitable for the encoding of files with locally skewed distributions. 
Its main idea is to assign larger weights to closer to be encoded symbols by means of an increasing weight function, rather than considering each position in the text evenly.
A well known transformation that tends to convert input files into files with a more skewed distribution is the {\it Burrows-Wheeler Transform\/}. 
This paper employs the weighted approach on Burrows-Wheeler transformed files and provides empirical evidence of the efficiency of this combination.

\end{abstract}

\section{Introduction}

The {\it Burrows-Wheeler Transform\/} ({\sf BWT}) \cite{BWT} is the basis of the popular compression method {\sf bzip2}, yielding, on many types of possible input files, better compression than {\sf gzip} and other competitors. As a matter of fact, {\sf BWT} itself is not a compression method: its output is a permutation of its input, which has obviously the same size. The usefulness of the transformation is that it has a tendency to reorganize the data into what seems to be a more coherent form, grouping many, though not all, identical characters together. The output is therefore usually more compressible, by applying as simple methods as run-length coding and move-to-front.

The combination of different methods to be applied one after the other, an action known as {\it cascading\/}, is not new to data compression. It can, for example, be found in {\sf gzip}, which first parses the input using LZ77 \cite{LZ77}, and then applies Huffman coding \cite{Moffat19} to the parsed elements. It therefore seems natural to try to apply {\sf BWT} in a pre-processing stage, and then compress the transformed string by means of more sophisticated compression schemes. This strategy fails, however, for static or dynamic Huffman or arithmetic coding. The contribution of this paper is to give empirical evidence that cascading {\sf BWT} with the recently developed {\it weighted\/} compression schemes implies significant savings.

A family of dynamic compression algorithms, named {\it weighted coding}, has  recently been proposed, which is especially suitable for the encoding of files with locally skewed distributions. 
The main idea of the weighted approach is to assign larger weights to closer to be encoded symbols by means of an increasing weight function, rather than considering each position in the text evenly.

Traditional dynamic algorithms use the distribution of the symbols in the already processed portion of the input file as an estimate for the distribution of the elements still to come later in the text. However, the assumption that the past is a good approximation for the future, is not necessarily true.
A {\it Forward-looking\/} dynamic algorithm, using the true distribution of the remaining portion of the file, was suggested in \cite{KSSJ}. In this method the frequencies of the symbols in the entire file are prepended to the compressed file. In the encoding process, these frequencies are gradually updated to reflect the number of occurrences in the remaining part of the file by {\it decrementing\/} the frequency of the character that is currently being processed. A hybrid method, combining both traditional and forward-looking approaches, is proposed in \cite{FKS}: in this method the frequencies for all characters are not transmitted at the beginning of the file but rather progressively, each time a new character is encountered.

Pushing this approach even further, a family of {\it Forward weighted} coding schemes is proposed in \cite{arxiv}, in which the encoding is based on the distribution derived from index-dependent weights. That is, this weighted method assigns higher priorities to positions that are closer to the currently processed one in the encoding process, rather than treating all positions in the input file equally. 
The weights assigned to the positions are generated by a non increasing function $f$, and the weight for each symbol $\sigma$ is the sum of the values of $f$ over all the positions where $\sigma$ occurs in the portion of the input file that is still to be encoded.

Recently,  a specific variant named the {\it Backward Weighted\/} coding has been studied \cite{WeightedDCC21}, which suggests a heuristic based on a weighted distribution, calculated only over positions that have already been processed. The core advantage of such a heuristic approach is a negligible header, relatively to a costly header used in {\it Forward Looking} and all {\it Forward Weighted} variants. 
Empirical tests have shown that backward weighted techniques can improve beyond the lower bound given by the entropy for static encoding.

The paper is constructed as follows.
Section~2 reviews the notation and formulation needed for the weighted encodings using a running example. Section~3 proves some properties involving the combination of general dynamic coding techniques and {\sf BWT}.  Section~4 presents empirical outcomes supporting the compression efficiency of the proposed method even in practice.  

\section{Notation and Formulation}

For the completeness of the paper, we include the definitions given in \cite{arxiv}, which formalize entropy based compression methods.

Let $T=T[1,n]$ be an input file of size $n$ over an alphabet $\Sigma$ 
of size $m$. 
A weight $W$ is defined based on the following parameters,
\vspace{3mm}
\begin{itemize}
\item [-]
A non negative function $g$, $g:[1,n] \longrightarrow {\rm I\!R}^+$, which assigns a positive real number to integers, seen as an assignment of a weight to each position $i\in [1,n]$ within $T$;

\item [-]
A symbol of the alphabet, $\sigma\in \Sigma$;

\item [-]
An interval $[\ell, u]$, $1\le \ell\le u\le n$ for restricting the domain of the function $g$.
\end{itemize}
\vspace{3mm}
The value of $W(g,\sigma, \ell, u)$ is defined for each symbol $\sigma$, as the sum of the values of the function $g$ for all positions $j$ in the range $[\ell,u]$ at which $\sigma$ occurs. 
Formally,
$$ W(g,\sigma, \ell, u)\; = \sum_{\{\ell \leq j \le  u \ \mid \ T[j]=\sigma\}}{g(j)}.$$

As a special case of weighted coding, {\it Backward Weighted\/} considers all the positions that have already been processed, that is, the interval is of the form $[\ell, u]=[1,i-1]$, and
$$W(g,\sigma,1,i-1)\;=\sum_{\{1 \leq j \leq i-1 \ \mid \ T[j]=\sigma\}}{g(j)}.$$ 

Traditional encoding methods can be reformulated as special instances of $W$ for which $g=\mathds{1}\equiv g(i)=1$ for all $i$.
For example, 
{\sf static} compression refers to weights for which  $W(g,\sigma,\ell,u)=W(\mathds{1},\sigma,1,n)$ is constant for all indices.

\def\tstrut{\vrule height 9pt depth 3pt width 0pt}

\renewcommand{\arraystretch}{1}

\begin{sidewaystable}
\begin{center}
{\scriptsize
\begin{tabular}{@{\hspace{0mm}}l@{\hspace{-2mm}}c@{\hspace{2.5mm}}l@{\hspace{4mm}}l@{\hspace{2.5mm}}l@{\hspace{2.5mm}}l@{\hspace{2.5mm}}l@{\hspace{2.5mm}}l@{\hspace{2.5mm}}l@{\hspace{2.5mm}}l@{\hspace{2.5mm}}l@{\hspace{2.5mm}}l@{\hspace{2.5mm}}l@{\hspace{2.5mm}}l@{\hspace{2.5mm}}l@{\hspace{2.5mm}}l@{\hspace{2.5mm}}l@{\hspace{2.5mm}}l@{\hspace{2.5mm}}l@{\hspace{2.5mm}}l@{\hspace{2.5mm}}l@{\hspace{2.5mm}}l@{\hspace{0mm}}}
\toprule
&\multicolumn{1}{@{\hspace{0mm}}l}{$i$}&&\multicolumn{1}{@{\hspace{-0.5mm}}l}{\makebox[4mm]{\scriptsize 1}}& \multicolumn{1}{@{\hspace{-0.5mm}}l}{\makebox[4mm]{\scriptsize 2}}& 
\multicolumn{1}{@{\hspace{-0.5mm}}l}{\makebox[4mm]{\scriptsize 3}}&
\multicolumn{1}{@{\hspace{-0.5mm}}l}{\makebox[4mm]{\scriptsize 4}}&
\multicolumn{1}{@{\hspace{-0.5mm}}l}{\makebox[4mm]{\scriptsize $\cdots$}}& 
\multicolumn{1}{@{\hspace{-0.5mm}}l}{\makebox[4mm]{\scriptsize 13}}&
\multicolumn{1}{@{\hspace{-0.5mm}}l}{\makebox[4mm]{\scriptsize 14}}&
\multicolumn{1}{@{\hspace{-0.5mm}}l}{\makebox[4mm]{\scriptsize 15}}&
\multicolumn{1}{@{\hspace{-0.5mm}}l}{\makebox[4mm]{\scriptsize 16}}&
\multicolumn{1}{@{\hspace{-0.5mm}}l}{\makebox[4mm]{\scriptsize $\cdots$}}& 
\multicolumn{1}{@{\hspace{-0.5mm}}l}{\makebox[4mm]{\scriptsize 35}}& 
\multicolumn{1}{@{\hspace{-0.5mm}}l}{\makebox[4mm]{\scriptsize 36}}& 
\multicolumn{1}{@{\hspace{-0.5mm}}l}{\makebox[4mm]{\scriptsize 37}}& 
\multicolumn{1}{@{\hspace{-0.5mm}}l}{\makebox[4mm]{\scriptsize 38}}&
\multicolumn{1}{@{\hspace{-0.5mm}}l}{\makebox[4mm]{\scriptsize $\cdots$}}& 
\multicolumn{1}{@{\hspace{-0.5mm}}l}{\makebox[4mm]{\scriptsize 49}}& 
\multicolumn{1}{@{\hspace{-0.5mm}}l}{\makebox[4mm]{\scriptsize 50}}& 
\multicolumn{1}{@{\hspace{0mm}}l@{\hspace{0mm}}}{\makebox[4mm]{\sf avg}} \\[1mm]
& $T$ && {\tt a} & {\tt t} & {\tt a} & {\tt t} & $\cdots$ & {\tt a} & {\tt t} & {\tt c} & {\tt g} & $\cdots$ & {\tt c} & {\tt g} & {\tt a} & {\tt t} & $\cdots$ & {\tt a} & {\tt t} \\[1mm] \toprule
& & $W$ & \tstrut 14 & 14 & 14 & 14 & ... & 14 & 14 & 11 & 11 & $\cdots$ & 11 & 11 & 14 & 14 & $\cdots$ & 14 & 14 & 
\\ [1.5mm]
{\sf static}& & {\sf TotIndx} & \tstrut 50 & 50 & 50 & 50 & ... & 50 & 50 & 50 & 50 & $\cdots$ & 50 & 50 & 50 & 50 & $\cdots$ & 50 & 50 & {\sf 1.990} \\ [2.5mm]
& &{\sf IC} & 1.84 & 1.84 & 1.84 & 1.84 &$\cdots$ & 1.84 & 1.84 & 2.18 & 2.18 & $\cdots$ & 2.18 & 2.18 & 1.84 & 1.84 & $\cdots$ & 1.84 & 1.84 & {\sf (+0.291)} \\ [1.5mm] 
\cline{3-21}
& & {\sf Indx}$W$  & \tstrut 1 & 1 & 1 & 1 &$\cdots$ & 1 & 1 & 1 & 1 & $\cdots$ & 1 & 1 & 1 & 1 & $\cdots$ & 1 & 1 &  \\ [1.5mm]
& & $W$ & \tstrut 1 & 1 & 2 & 2 & $\cdots$ & 7 & 7 & 1 & 1 & $\cdots$ & 11 & 11 & 8 & 8 & $\cdots$ & 14 & 14 &  \\ [1.5mm]
{\sf b-adp}& & {\sf TotIndx} & \tstrut 4 & 5 & 6 & 7 & $\cdots$ & 16 & 17 & 18 & 19 & $\cdots$ & 38 & 39 & 40 & 41 & $\cdots$ & 52 & 53 & 
{\sf 2.111} \\ [2.5mm]
& &{\sf IC} & 2.00 & 2.32 & 1.58 & 1.81 & $\cdots$ & 1.19 & 1.28 & 4.17 & 4.25 & $\cdots$ & 1.79 & 1.83 & 2.32 & 2.36 & $\cdots$ & 1.89 & 1.92 \\ [1.5mm] \cline{3-21}
& & {\sf Indx}$W$  & \tstrut 1 & 1 & 1 & 1 & $\cdots$ & 1 & 1 & 1 & 1 & $\cdots$ & 1 & 1 & 1 & 1 & $\cdots$ & 1 & 1 &  \\ [1.5mm]
& & $W$ & \tstrut 14 & 14 & 13 & 13 & $\cdots$ & 8 & 8 & 11 & 11 & $\cdots$ & 1 & 1 & 7 & 7 & $\cdots$ & 1 & 1 &  \\ [1.5mm]
{\sf f-adp}& & {\sf TotIndx} & \tstrut 50 & 49 & 48 & 47 & $\cdots$ & 38 & 37 & 36 & 35 & $\cdots$ & 16 & 15 & 14 & 13 & $\cdots$ & 2 & 1 & 
{\sf 1.820} \\ [2.5mm]
& &{\sf IC} & 1.82 & 1.84 & 1.81 & 1.88 & $\cdots$ & 2.25 & 2.21 & 1.71	& 1.67 & $\cdots$ & 4.00 & 3.91 & 1.00 & 0.89 & $\cdots$ & 1.00 & 0.00 & {\sf (+0.291)} \\ [1.5mm] \cline{3-21}
& & {\sf Indx}$W$ &\tstrut  1 & 1 & 1 & 1 & $\cdots$ & 4 & 4 & 4 & 8 & $\cdots$ & 64 & 128 & 128 & 128 & $\cdots$ & 512 & 512 \\ [1.5mm]
\multirow{2}{*}{\raisebox{-1mm}{\sf b-2}} & & $W$ &\tstrut  1 & 1 & 2 & 2 & $\cdots$ & 12 & 13 & 1 & 1 & $\cdots$ & 261 & 281 & 16 & 17 & $\cdots$ & 1552 & 1809 &\multirow{2}{*}
{\raisebox{-1.5mm}{\sf 1.981}}\\ [1.5mm] 
& & {\sf TotIndx} & \tstrut 4 & 5 & 6 & 7 & $\cdots$ & 27 & 31 & 35 & 39 & $\cdots$ & 575 & 639 & 767 & 895 & $\cdots$ & 4095 & 4607 \\ [2.5mm]
& & {\sf IC} & 2.00 & 2.32 & 1.58 & 1.81 & $\cdots$ & 1.17 & 1.25 & 5.13 & 5.29 & $\cdots$ & 1.14 & 1.19 & 5.58 & 5.72 & $\cdots$ & 1.40 & 1.35 \\ [1.5mm]
\cline{3-21}
& & {\sf Indx}$W$ &\tstrut  1.00 & 1.15 & 1.32 & 1.52 & $\cdots$ & 5.28 & 6.06 & 6.96 & 8.00 & $\cdots$ & 111.43 & 128.00 & 147.03 & 168.90 & $\cdots$ & 776.05 & 891.44 \\ [1.5mm]
\multirow{2}{*}{\sf b-weight} & & $W$ &\tstrut  1.00 & 1.00 & 2.00 & 2.15 &  $\cdots$ & 14.39	& 16.38 & 1.00 & 1.00 & $\cdots$ & 327.96 & 376.58 & 19.67 & 22.44 & $\cdots$ & 1988.36 & 2283.88&\multirow{2}{*}
{\raisebox{-1.5mm}{\sf 1.989}}  \\ [1.5mm] 
& & {\sf TotIndx} & \tstrut 4.00 & 5.00 & 6.15 & 7.47 &
$\cdots$ & 32.77 & 38.05 & 44.11 & 51.08 & $\cdots$ & 746.65 & 858.08 & 986.08 & 1113.11 & $\cdots$ & 5216.21 & 5992.26 \\ [2.5mm]
& & {\sf IC} & 2.00 & 2.32 & 1.62 & 1.80 & $\cdots$ & 1.19 & 1.22	& 5.46 & 5.67 & $\cdots$ & 1.19 & 1.19 & 5.65 & 5.66 & $\cdots$ & 1.39 & 1.39 \\ [1.5mm]
\bottomrule
\end{tabular}
}
\vspace*{5mm}
\caption{Coding example for $T={\tt {(at)}^7{(cg)}^{11}{(at)}^7}$.}
\label{example}
\end{center}
\end{sidewaystable}


As a short illustration for the weighted approach, 
Table~1 brings a comparative chart for the encoding of a small example of 50 characters: 
$$T=x_1\cdots x_{50}={\tt (at)}^{7}{\tt (cg)}^{11}{\tt (at)}^{7},$$ 
shown in the second row of the table for several representative portions of $T$, just underneath the indices. 
The {\sf static} compression for $T$ considers the probabilities $\frac{14}{50}$ for {\tt a} and {\tt t} and $\frac{11}{50}$ for {\tt c} and {\tt g}. 
These probabilities can be calculated from the two first rows corresponding to the {\sf static} method  in Table~1, the first entitled  $W$ and representing the weight of the specific character,  and the second  entitled {\sf TotIndx} representing the cumulative  
weights of all the characters. The ratio of these weights can be considered as a probability $p_i$, and the
corresponding {\it Information content\/}, $-\log p_i$  for each position $i$, is shown in the line entitled {\sf IC}. For {\sf static}, the {\sf IC} values are 
1.84 for {\tt a} and {\tt t} and  2.18 for {\tt c} and {\tt g}. The last column of the table, headed  {\sf avg}, gives the weighted average of these {\sf IC} values, which is in fact the {\it entropy\/}. 

The classic adaptive coding, {\sf b-adp} \cite{Vitter87}, is a specific backward weight method in which $g(i)=1$ for all $i$, and the backward weights refer to all positions $i$ with $1 \le i < n$ by:
$$W(\mathds{1},\sigma,1,i-1)= \sum_{\{1\le j \le i-1 \ \mid \ T[j]=\sigma\}}{1},$$
which is simply the number of occurrences of the current character $\sigma$ up to that point, i.e.,  in  $T[1,i-1]$. The details appear in the second part of Table~1, headed {\sf b-adp}. The line entitled $W$ refers now, at position $i$, to specific weight of the character $\sigma=T[i]$ 
up to the given column $i$ of the table, that is, the sum of the index weights {\sf Indx}$W$ for those indices $j<i$ at which the character $\sigma$ occurs,  including the initial 1 values. For {\sf b-adp}, as well as for the following method {\sf f-adp}, the values of {\sf Indx}$W$ are just 1 for every $i$. The cumulative $W$ values for all the characters $\sigma\in\Sigma$ are given in the row entitled {\sf TotIndx}. 

The symmetric counterpart of {\sf b-adp} is the {\it forward looking\/} method, {\sf f-adp}, which, at each location $i$, considers the positions yet to come $[i,n]$ rather than those already processed $[1,i-1]$ as the {\it backward looking\/} {\sf b-adp}. That is, $$W(\mathds{1},\sigma,i,n)= \sum_{\{i\le j \le n \ \mid \ T[j]=\sigma\}}{1}.$$
For our running example, {\sf f-adp} initializes the weights of the characters {\tt a, t} to 14 and {\tt c, g} to 11. The counts are then gradually decremented reflecting the remaining number of occurrences for each $\sigma$ from position $i$ to the end of $T$.
The header for {\sf f-adp} describes the exact frequencies of the involved characters, and its size is 0.291 for our example, as for {\sf static}.

A simple adaptive weighted coding, denoted by {\sf b-2}, has been proposed in \cite{WeightedDCC21}. It is 
inspired by Nelson and Gailly \cite{Nelson}, who rescale the frequencies in order to cope with hardware constraints like the representation of the number of occurrences as 16-bit integers.   
{\sf b-2} divides all the frequencies at the end of every block of $k$ characters, for a given parameter $k$, regardless of computer hardware restrictions. That is, the occurrences of characters at the beginning of the input file contribute to $W$ less than those closer to the current position. Furthermore, all positions within the same block contribute equally to $W$, and  their contribution weight is twice as large as the weight assigned to the indices in the preceding block. Formally, for each pair of indices $i$ and $i+k$, the function $g$ for {\sf b-2}, denoted by $g_{\mbox{\sf\scriptsize b-2}}$,  maintains the equation  $g_{\mbox{\sf\scriptsize b-2}}(i+k) = 2g_{\mbox{\sf\scriptsize b-2}}(i)$.
The first line of the block headed {\sf b-2} shows the index weight, {\sf Indx}$W$, chosen here with parameter $k=5$. Starting with 1, the value doubles after each block of 5 positions. The other lines, entitled $W$, {\sf TotIndx} and {\sf IC}, are then defined as above.

A refinement of {\sf b-2}, named {\sf b-weight}, is another special weighted coding \cite{WeightedDCC21} inspired by the division by 2, but based on the function
$g_{\mbox{\sf\scriptsize b-weight}}(i)={(\sqrt[k]{2})}^{i-1}$ for $i\ge 1$, for a given parameter $k$. The function $g_{\mbox{\sf\scriptsize b-weight}}$ still maintains a fixed ratio of 2 between blocks but with a smooth hierarchy between all indices, rather than sharp differences at block boundaries.
The ratio of 2 for indices that are $k$ apart can be seen by:
$$ g_{\mbox{\sf\scriptsize b-weight}}(i+k) = {(\sqrt[k]{2})}^{i+k-1}  =  {{(\sqrt[k]{2})}^k \cdot (\sqrt[k]{2})}^{i-1}
 = 2\cdot g_{\mbox{\sf\scriptsize b-weight}}(i).$$
 
The index weight {\sf Indx}$W$ for the {\sf b-weight} method consists of real numbers rather than integers as above. The shown values correspond to $k=5$, 
so that
$g_{\mbox{\sf\scriptsize b-weight}}(i)={(\sqrt[5]{2})}^{i-1}= 1.149^{i-1}$.
This yields an average codeword length of 1.989 bits per symbol. 

The weighted approach should be applied only on  
a text that has skewed probability distributions in different portions of the file: there is a price for adjusting the model in the transition between regions of different distributions, and this overhead gets negligible only when the text becomes long enough, or if the difference between the distributions is sufficiently sharp as in this short example.

One of the features of the {\sf BWT} is that it has a tendency to reorganize the text $T$ such that {\sf BWT}$(T)$ contains several runs of repeated characters. In particular, 
for our running example, $T$, the {\sf BWT} is
$${\sf BWT}(T)={\tt t}^{7}{\tt g}{\tt t}^6{\tt a}^{14}{\tt g}^{10}{\tt t}{\tt c}^{11},$$ 
even though in the original $T$, there is not a single pair of identical adjacent characters. This tendency implies that the local distributions seem more skewed, which is advantageous to the weighted approach.

\setlength{\tabcolsep}{0.5em}

\begin{table}[!ht]
    \centering
\small
\begin{center}
\begin{tabular}
{@{\hspace{0mm}}l@{\hspace{6mm}}|c@{\hspace{6mm}}c@{\hspace{6mm}}c@{\hspace{6mm}}|c@{\hspace{6mm}}c@{\hspace{6mm}}c@{\hspace{0mm}}}
\toprule
&\multicolumn{3}{c|}{$T$}
&\multicolumn{3}{c}{${\sf BWT}(T)$}\\ [1mm]
& $k$ & header & bps & $k$ & header & bps \\ [1mm] \hline
{\sf static} & -- & 0.291 & 2.281 & -- & 0.404 & 2.394\\ 
{\sf b-adp} & -- & -- & 2.111 & -- & 0.113 & 2.224\\ 
{\sf f-adp} & -- & 0.291 & 2.111 & -- & 0.404 & 2.224\\ 
{\sf b-weight} & 5 & -- & 1.989 & 3 & 0.113 & 1.567\\
{\sf b-2} & 5 & -- & 1.981 & 3 & 0.113 & 1.562 \\
\bottomrule
\end{tabular}
\vspace{3mm}
\caption{\small\sl Storage requirements of the encoding methods
on 
${\sf BWT}(T)={\tt t}^{7}{\tt g}{\tt t}^6{\tt a}^{14}{\tt g}^{10}{\tt t}{\tt c}^{11}$.}
\label{Summerize}
\end{center}
\end{table}

We applied the same 5 methods on ${\sf BWT}(T)$.  A small amendment is necessary because the {\sf BWT} on its own is not reversible---it needs in addition a pointer to a starting point, actually the index of the last character of $T$ within {\sf BWT}$(T)$, which requires $\log n$ bits, or $\frac{\log n}{n}$ per symbol. We thus included this additional overhead of $\frac{\log 50}{50}=0.113$ bits in the header.
Note that this overhead is 
constant for all methods.
On the other hand, the static and the forward compression methods require 
information 
about the distribution probabilities,
which should be prepended to the compressed file independently from applying {\sf BWT}  or not. We use the lower bound approximation given by the {\sf IC} to 
express this information in this example.
Yet, for our experiments, we encoded the meta data information by means of a {\it universal\/} Elias \cite{Elias} $C_\delta$  code.

The comparison of the final compression results, before and after applying the {\sf BWT}, is presented in Table~2 and includes
the parameter $k$, the header size and the corresponding total storage costs including the header in bits per symbol (bps). The first three columns refer to the encoding variants on the original file $T$, and the last three columns are the results on {\sf BWT}$(T)$. 
As can be seen, while the improvement of the weighted methods with {\sf BWT} is about 21\%, the net encoding, excluding the header size, is the same for {\sf static}, {\sf b-adp} and {\sf f-adp}, regardless whether {\sf BWT} has been applied or not. The total bps results, including the header size, are identical for {\sf b-adp} and {\sf f-adp},
because the gain in the net compression by {\sf f-adp} is exactly the same as the loss incurred by the additional overhead due to the exact frequencies.   
In fact, these are not a coincidences, and in the following section we show that the compression performance is preserved for {\sf static}, {\sf b-adp} and {\sf f-adp} under {\it any\/} permutations of the symbols of $T$, hence in particular for {\sf BWT}.
Moreover, we show that the size of the compressed file including the header for {\sf b-adp} and {\sf f-adp} is the same.

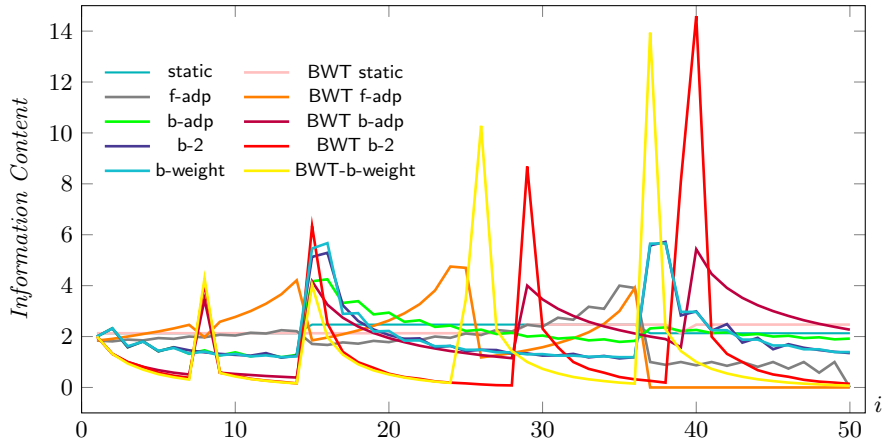
\begin{figure}[!ht]
    \centering
\begin{center}
\pgfplotsset{compat= newest,   
width=12cm, height=7cm,
}

    \centering

    \begin{tikzpicture}
\definecolor{color1}{rgb}{1,0.498039215686275,0.0549019607843137}
\definecolor{color2}{rgb}{0.172549019607843,0.627450980392157,0.172549019607843}
\definecolor{color3}{rgb}{0.83921568627451,0.152941176470588,0.156862745098039}
\definecolor{color4}{rgb}{0.580392156862745,0.403921568627451,0.741176470588235}
\definecolor{color5}{rgb}{0.549019607843137,0.337254901960784,0.294117647058824}
\definecolor{color6}{rgb}{0.890196078431372,0.466666666666667,0.76078431372549}
\definecolor{color7}{rgb}{0.737254901960784,0.741176470588235,0.133333333333333}
\definecolor{color8}{rgb}{0.0901960784313725,0.745098039215686,0.811764705882353}
    \begin{axis}[scaled ticks=false, 
    tick label style={/pgf/number format/fixed}, 
	xtick={0, 10, ..., 50},
	ytick = {0, 2, ...,14},
	xmin=0 , xmax =51,
	ymin=-1,   ymax=15,
	every axis x label/.style={at={(current axis.right of origin)},anchor=north west},
	xlabel={$i$},
   	ylabel={\it Information Content},
    mark size=0,
    legend columns=2, 
    /tikz/column 2/.style={
                column sep=5pt,
            },
    legend style={nodes={scale=0.8, transform shape}, fill opacity=0.8, 
draw opacity=1, text opacity=1,
    at={(0.23,0.55)}, anchor=south, draw=none},
    legend style={draw=none},
]

\addplot [BlueGreen, mark size=1, mark=none, line width = 1pt, 
thick] coordinates{
(1,2.13)	(2,2.13)	(3,2.13)	(4,2.13)	(5,2.13)	(6,2.13)	(7,2.13)	(8,2.13)	(9,2.13)	(10,2.13)	(11,2.13)	(12,2.13)	(13,2.13)	(14,2.13)	(15,2.47)	(16,2.47)	(17,2.47)	(18,2.47)	(19,2.47)	(20,2.47)	(21,2.47)	(22,2.47)	(23,2.47)	(24,2.47)	(25,2.47)	(26,2.47)	(27,2.47)	(28,2.47)	(29,2.47)	(30,2.47)	(31,2.47)	(32,2.47)	(33,2.47)	(34,2.47)	(35,2.47)	(36,2.47)	(37,2.13)	(38,2.13)	(39,2.13)	(40,2.13)	(41,2.13)	(42,2.13)	(43,2.13)	(44,2.13)	(45,2.13)	(46,2.13)	(47,2.13)	(48,2.13)	(49,2.13)	(50,2.13)
};
\addlegendentry{\sf static}

\addplot [pink,  mark size=1, mark=none, line width = 1pt] coordinates{
(1,2.13)	(2,2.13)	(3,2.13)	(4,2.13)	(5,2.13)	(6,2.13)	(7,2.13)	(8,2.13)	(9,2.13)	(10,2.13)	(11,2.13)	(12,2.13)	(13,2.13)	(14,2.13)   (15,2.13)	(16,2.13)	(17,2.13)	(18,2.13)	(19,2.13)	(20,2.13)	(21,2.13)	(22,2.13)	(23,2.13)	(24,2.13)	(25,2.13)	(26,2.13)	(27,2.13)	(28,2.13)	(29,2.47)	(30,2.47)
(31,2.47)	(32,2.47)	(33,2.47)	(34,2.47)	(35,2.47)	(36,2.47)   (37,2.47)	(38,2.47)	(39,2.13)	(40,2.47)	(41,2.47)	(42,2.47)	(43,2.47)	(44,2.47)	(45,2.47)	(46,2.47)	(47,2.47)	(48,2.47)	(49,2.47)	(50,2.47)
};
\addlegendentry{\sf BWT static}

\addplot [gray,  mark size=1, mark=none, line width = 1pt] coordinates{
(1,1.84) (2,1.81) (3,1.88) (4,1.85) (5,1.94) (6,1.91) (7,2.00) (8,1.97) (9,2.07) (10,2.04) (11,2.15) (12,2.12) (13,2.25) (14,2.21) (15,1.71) (16,1.67) (17,1.77) (18,1.72) (19,1.83) (20,1.78) (21,1.91) (22,1.86) (23,2.00) (24,1.95) (25,2.12) (26,2.06) (27,2.26) (28,2.20) (29,2.46) (30,2.39) (31,2.74) (32,2.66) (33,3.17) (34,3.09) (35,4.00) (36,3.91) (37,1.00) (38,0.89) (39,1.00) (40,0.87) (41,1.00) (42,0.85) (43,1.00) (44,0.81) (45,1.00) (46,0.74) (47,1.00) (48,0.58) (49,1.00) (50,0.00)
};
\addlegendentry{\sf f-adp}

\addplot [orange,  mark size=1, mark=none, line width = 1pt] coordinates{
(1,1.84) (2,1.91) (3,2.00) (4,2.10) (5,2.20) (6,2.32) (7,2.46) (8,1.97) (9,2.58) (10,2.77) (11,3.00) (12,3.29) (13,3.66) (14,4.21) (15,1.85) (16,1.96) (17,2.09) (18,2.24) (19,2.42) (20,2.63) (21,2.91) (22,3.27) (23,3.81) (24,4.75) (25,4.70) (26,1.18) (27,1.26) (28,1.35) (29,1.46) (30,1.58) (31,1.74) (32,1.93) (33,2.17) (34,2.50) (35,3.00) (36,3.91) (37,0.00) (38,0.00) (39,0.00) (40,0.00) (41,0.00) (42,0.00) (43,0.00) (44,0.00) (45,0.00) (46,0.00) (47,0.00) (48,0.00) (49,0.00) (50,0.00)
};
\addlegendentry{\sf BWT f-adp}

\addplot [green,  mark size=1, mark=none, line width = 1pt] coordinates{
(1,2.00)	(2,2.32)	(3,1.58)	(4,1.81)	(5,1.42)	(6,1.58)	(7,1.32)	(8,1.46)	(9,1.26)	(10,1.38)	(11,1.22)	(12,1.32)	(13,1.19)	(14,1.28)	(15,4.17)	(16,4.25)	(17,3.32)	(18,3.39)	(19,2.87)	(20,2.94)	(21,2.58)	(22,2.64)	(23,2.38)	(24,2.43)	(25,2.22)	(26,2.27)	(27,2.10)	(28,2.15)	(29,2.00)	(30,2.04)	(31,1.92)	(32,1.96)	(33,1.85)	(34,1.89)	(35,1.79)	(36,1.83)	(37,2.32)	(38,2.36)	(39,2.22)	(40,2.26)	(41,2.14)	(42,2.17)	(43,2.06)	(44,2.10)	(45,2.00)	(46,2.03)	(47,1.94)	(48,1.97)	(49,1.89)	(50,1.92)
};
\addlegendentry{\sf b-adp}

\addplot [purple,  mark size=1, mark=none, line width = 1pt] coordinates{
(1,2.00)	(2,1.32)	(3,1.00)	(4,0.81)	(5,0.68)	(6,0.58)	(7,0.51)	(8,3.46)	(9,0.58)	(10,0.53)	(11,0.49)	(12,0.45)	(13,0.42)	(14,0.39)	(15,4.17)	(16,3.25)	(17,2.74)	(18,2.39)	(19,2.14)	(20,1.94)	(21,1.78)	(22,1.64)	(23,1.53)	(24,1.43)	(25,1.35)	(26,1.27)	(27,1.21)	(28,1.15)	(29,4.00)	(30,3.46)	(31,3.09)	(32,2.81)	(33,2.58)	(34,2.40)	(35,2.25)	(36,2.12)	(37,2.00)	(38,1.90)	(39,1.58)	(40,5.43)	(41,4.46)	(42,3.91)	(43,3.52)	(44,3.23)	(45,3.00)	(46,2.81)	(47,2.64)	(48,2.50)	(49,2.38)	(50,2.27)
};
\addlegendentry{\sf BWT b-adp}

\addplot [BlueViolet,  mark size=1, mark=none, line width = 1pt] 
coordinates{
(1,2.00)	(2,2.32)	(3,1.58)	(4,1.81)	(5,1.42)	(6,1.58)	(7,1.46)	(8,1.38)	(9,1.32)	(10,1.28)	(11,1.25)	(12,1.35)	(13,1.17)	(14,1.25)	(15,5.13)	(16,5.29)	(17,3.23)	(18,2.61)	(19,2.28)	(20,2.06)	(21,1.91)	(22,1.93)	(23,1.58)	(24,1.63)	(25,1.43)	(26,1.48)	(27,1.47)	(28,1.33)	(29,1.34)	(30,1.25)	(31,1.26)	(32,1.32)	(33,1.18)	(34,1.24)	(35,1.14)	(36,1.19)	(37,5.58)	(38,5.72)	(39,2.83)	(40,2.99)	(41,2.23)	(42,2.49)	(43,1.76)	(44,1.95)	(45,1.5) 
(46,1.70)	(47,1.56)	(48,1.47)	(49,1.40)	(50,1.35)};
\addlegendentry{\sf b-2}

\addplot [red,  mark size=1, mark=none, line width = 1pt] 
coordinates{
(1,2.00)	(2, 1.32)	(3,1.00)	(4,0.81)	(5,0.58)	(6,0.46)	(7,0.38)	(8,4.09)	(9,0.58)	(10,0.47)	(11,0.34)	(12,0.27)	(13,0.22)	(14,0.16)	(15,6.34)	(16,2.51)	(17,1.40)	(18,0.99)	(19,0.77)	(20,0.54)	(21,0.41)	(22,0.34)	(23,0.24)	(24,0.19)	(25,0.16)	(26,0.12)	(27,0.09)	(28,0.08)	(29,8.68)	(30,2.31)	(31,1.58)	(32,1.00)	(33,0.73)	(34,0.58)	(35,0.41)	(36,0.32)	(37,0.26)	(38,0.19)	(39,8.11)	(40,14.59)	(41,2.00)	(42,1.32)	(43,1.00)	(44,0.68)	(45,0.51)	(46,0.42)	(47,0.30)	(48,0.23)	(49,0.19)	(50,0.14)
};
\addlegendentry{\sf BWT b-2}

\addplot [color8,  mark size=1, mark=none, line width = 1pt] coordinates{
(1,2.00)	(2,2.32)	(3,1.62)	(4,1.80)	(5,1.44)	(6,1.55)	(7,1.33)	(8,1.41)	(9,1.26)	(10,1.32)	(11,1.22)	(12,1.26)	(13,1.19)	(14, 1.22)	(15,5.46)	(16,5.67)	(17,2.89)	(18,2.92)	(19,2.20)	(20,2.22)	(21,1.84)	(22,1.85)	(23,1.62)	(24,1.63)	(25,1.48)	(26,1.48)	(27,1.38)	(28,1.38)	(29,1.30)	(30,1.31)	(31,1.25)	(32,1.25)	(33,1.21)	(34,1.22)	(35,1.19)	(36,1.19)	(37,5.65)	(38,5.66)	(39,2.97)	(40,2.97)	(41,2.25)	(42,2.25)	(43,1.88)	(44,1.88)	(45,1.65)	(46,1.65)	(47,1.50)	(48,1.50)	(49,1.39)	(50,1.39)};
\addlegendentry{\small \sf b-weight}

\addplot [yellow,  mark size=1, mark=none, line width = 1pt] coordinates{
(1,2.00) (2,1.32) (3,0.94) (4,0.70) (5,0.52) (6,0.40) (7,0.31) (8,4.29) (9,0.57) (10,0.43) (11,0.33) (12,0.26) (13,0.20) (14,0.16) (15,4.02) (16,1.97) (17,1.29) (18,0.91) (19,0.67) (20,0.51) (21,0.39) (22,0.30) (23,0.23) (24,0.18) (25,3.45) (26,10.28) (27,2.27) (28,1.43) (29,1.00) (30,0.73) (31,0.55) (32,0.41) (33,0.32) (34,0.25) (35,0.19) (36,0.15) (37,13.94) (38,2.28) (39,1.43) (40,1.00) (41,0.73) (42,0.55) (43,0.42) (44,0.32) (45,0.25) (46,0.19) (47,0.15) (48,0.12) (49,0.09) (50,0.07)
};
\addlegendentry{\small \sf BWT-b-weight}


\end{axis}
\end{tikzpicture}
\end{center}
    \caption{Information content per index before and after applying {\sf BWT} on $T={\tt {(at)}^7{(cg)}^{11}{(at)}^7}$.}
    \label{Example}
\end{figure}

Figure~\ref{example} is a visualization of the differences between the methods, plotting for each the information content as a function of the position $i$ for our running example. The net encoding results before and after applying {\sf BWT} are depicted in cold and warm colors, respectively. We see that the fluctuations are much more accentuated after applying {\sf BWT}.
Figure~\ref{exampleAcc} displays the cumulative values of the same data.
As can be seen from both figures, 
the backward weighted methods are more sensitive to fluctuations in the distribution, but adjust faster to changes. 
The final points of the accumulated values for the traditional methods differ only by the size of the header, and the advantage of {\sf BWT} for {\sf b-2} and {\sf b-weight} can be seen by the fact that the corresponding plots are below their counterparts without {\sf BWT}. 
The differences between {\sf b-2} and {\sf b-weight}, with and without {\sf BWT}, are so small that their plots seem to be overlapping.

\begin{figure}[!ht]
    \centering
\begin{center}
\pgfplotsset{compat= newest,   
width=12cm, height=7cm,
}

    \centering

    \begin{tikzpicture}
\definecolor{color1}{rgb}{1,0.498039215686275,0.0549019607843137}
\definecolor{color2}{rgb}{0.172549019607843,0.627450980392157,0.172549019607843}
\definecolor{color3}{rgb}{0.83921568627451,0.152941176470588,0.156862745098039}
\definecolor{color4}{rgb}{0.580392156862745,0.403921568627451,0.741176470588235}
\definecolor{color5}{rgb}{0.549019607843137,0.337254901960784,0.294117647058824}
\definecolor{color6}{rgb}{0.890196078431372,0.466666666666667,0.76078431372549}
\definecolor{color7}{rgb}{0.737254901960784,0.741176470588235,0.133333333333333}
\definecolor{color8}{rgb}{0.0901960784313725,0.745098039215686,0.811764705882353}
    \begin{axis}[scaled ticks=false, 
    tick label style={/pgf/number format/fixed}, 
	xtick={0, 10, ..., 50},
	ytick = {0, 20, ...,120},
	xmin=0 , xmax =51,
	ymin=0,   ymax=125,
	every axis x label/.style={at={(current axis.right of origin)},anchor=north west},
	xlabel={$i$},
   	ylabel={\it Information Content},
    mark size=0,
    legend columns=2, 
    /tikz/column 2/.style={
                column sep=5pt,
            },
    legend style={nodes={scale=0.8, transform shape}, fill opacity=0.8, 
draw opacity=1, text opacity=1,
    at={(0.23,0.63)}, anchor=south, draw=none},
    legend style={draw=none},
]

\addplot [BlueGreen, mark size=1, mark=none, line width = 1pt, 
thick] coordinates{
(0,14.53) (1,16.37) (2,18.21) (3,20.04) (4,21.88) (5,23.72) (6,25.55) (7,27.39) (8,29.23) (9,31.06) (10,32.90) (11,34.74) (12,36.57) (13,38.41) (14,40.24) (15,42.43) (16,44.61) (17,46.80) (18,48.98) (19,51.17) (20,53.35) (21,55.54) (22,57.72) (23,59.90) (24,62.09) (25,64.27) (26,66.46) (27,68.64) (28,70.83) (29,73.01) (30,75.20) (31,77.38) (32,79.56) (33,81.75) (34,83.93) (35,86.12) (36,88.30) (37,90.14) (38,91.97) (39,93.81) (40,95.65) (41,97.48) (42,99.32) (43,101.16) (44,102.99) (45,104.83) (46,106.67) (47,108.50) (48,110.34) (49,112.18) (50,114.01)
};
\addlegendentry{\sf static}

\addplot [pink,  mark size=1, mark=none, line width = 1pt] coordinates{
(0,20.18) (1,22.01) (2,23.85) (3,25.69) (4,27.52) (5,29.36) (6,31.20) (7,33.03) (8,35.22) (9,37.05) (10,38.89) (11,40.73) (12,42.56) (13,44.40) (14,46.24) (15,48.42) (16,50.61) (17,52.79) (18,54.97) (19,57.16) (20,59.34) (21,61.53) (22,63.71) (23,65.90) (24,68.08) (25,69.92) (26,72.10) (27,74.29) (28,76.47) (29,78.65) (30,80.84) (31,83.02) (32,85.21) (33,87.39) (34,89.58) (35,91.76) (36,93.95) (37,95.78) (38,97.62) (39,99.46) (40,101.29) (41,103.13) (42,104.96) (43,106.80) (44,108.64) (45,110.47) (46,112.31) (47,114.15) (48,115.98) (49,117.82) (50,119.66)
};
\addlegendentry{\sf BWT static}

\addplot [gray,  mark size=1, mark=none, line width = 1pt] coordinates{
(0,14.53) (1,16.37) (2,18.17) (3,20.06) (4,21.91) (5,23.85) (6,25.76) (7,27.76) (8,29.72) (9,31.80) (10,33.83) (11,35.98) (12,38.10) (13,40.35) (14,42.56) (15,44.27) (16,45.94) (17,47.70) (18,49.42) (19,51.25) (20,53.04) (21,54.95) (22,56.80) (23,58.80) (24,60.75) (25,62.87) (26,64.93) (27,67.19) (28,69.39) (29,71.85) (30,74.24) (31,76.98) (32,79.64) (33,82.81) (34,85.90) (35,89.90) (36,93.81) (37,94.81) (38,95.70) (39,96.70) (40,97.57) (41,98.57) (42,99.42) (43,100.42) (44,101.23) (45,102.23) (46,102.97) (47,103.97) (48,104.55) (49,105.55) (50,105.55)
};
\addlegendentry{\sf f-adp}

\addplot [orange,  mark size=1, mark=none, line width = 1pt] coordinates{
(0,20.18) (1,22.02) (2,23.93) (3,25.93) (4,28.03) (5,30.23) (6,32.55) (7,35.01) (8,36.98) (9,39.56) (10,42.33) (11,45.33) (12,48.62) (13,52.28) (14,56.49) (15,58.34) (16,60.30) (17,62.39) (18,64.62) (19,67.04) (20,69.67) (21,72.58) (22,75.85) (23,79.66) (24,84.41) (25,89.11) (26,90.30) (27,91.56) (28,92.91) (29,94.37) (30,95.96) (31,97.70) (32,99.62) (33,101.79) (34,104.29) (35,107.29) (36,111.20) (37,111.20) (38,111.20) (39,111.20) (40,111.20) (41,111.20) (42,111.20) (43,111.20) (44,111.20) (45,111.20) (46,111.20) (47,111.20) (48,111.20) (49,111.20) (50,111.20)
};
\addlegendentry{\sf BWT f-adp}

\addplot [green,  mark size=1, mark=none, line width = 1pt] coordinates{
(0,0.00) (1,2.00) (2,4.32) (3,5.91) (4,7.71) (5,9.13) (6,10.71) (7,12.04) (8,13.50) (9,14.76) (10,16.14) (11,17.36) (12,18.68) (13,19.87) (14,21.15) (15,25.32) (16,29.57) (17,32.89) (18,36.29) (19,39.16) (20,42.10) (21,44.68) (22,47.33) (23,49.71) (24,52.14) (25,54.36) (26,56.64) (27,58.73) (28,60.88) (29,62.88) (30,64.93) (31,66.84) (32,68.80) (33,70.65) (34,72.54) (35,74.33) (36,76.15) (37,78.47) (38,80.83) (39,83.05) (40,85.31) (41,87.45) (42,89.62) (43,91.68) (44,93.78) (45,95.78) (46,97.81) (47,99.75) (48,101.72) (49,103.62) (50,105.54)
};
\addlegendentry{\sf b-adp}

\addplot [purple,  mark size=1, mark=none, line width = 1pt] coordinates{
(0,5.64) (1,7.64) (2,8.97) (3,9.97) (4,10.77) (5,11.45) (6,12.04) (7,12.55) (8,16.01) (9,16.60) (10,17.13) (11,17.61) (12,18.06) (13,18.47) (14,18.86) (15,22.03) (16,24.69) (17,27.02) (18,29.09) (19,30.96) (20,32.68) (21,34.26) (22,35.74) (23,37.11) (24,38.41) (25,39.41) (26,44.27) (27,48.17) (28,51.54) (29,54.54) (30,57.27) (31,59.77) (32,62.09) (33,64.26) (34,66.30) (35,68.23) (36,70.05) (37,75.37) (38,79.73) (39,83.54) (40,86.96) (41,90.10) (42,93.01) (43,95.73) (44,98.28) (45,100.70) (46,102.99) (47,105.17) (48,107.26) (49,109.26) (50,111.18)
};
\addlegendentry{\sf BWT b-adp}

\addplot [BlueViolet,  mark size=1, mark=none, line width = 1pt] 
coordinates{
(0,0) (1,2) 	(2,4.32) 	(3,5.9) 	(4,7.71) 	(5,9.13) 	(6,10.71) 	(7,12.17) 	(8,13.55) 	(9,14.87) 	(10,16.15) 	(11,17.4) 	(12,18.75) 	(13,19.92) 	(14,21.17) 	(15,26.3) 	(16,31.59) 	(17,34.82) 	(18,37.43) 	(19,39.71) 	(20,41.77) 	(21,43.68) 	(22,45.61) 	(23,47.19) 	(24,48.82) 	(25,50.25) 	(26,51.73) 	(27,53.2) 	(28,54.53) 	(29,55.87) 	(30,57.12) 	(31,58.38) 	(32,59.7) 	(33,60.88) 	(34,62.12) 	(35,63.26) 	(36,64.45) 	(37,70.03) 	(38,75.75) 	(39,78.58) 	(40,81.57) 	(41,83.8) 	(42,86.29) 	(43,88.05) 	(44,90) 	(45,91.5)     (46,93.2) 	(47,94.76) 	(48,96.23) 	(49,97.63) 	(50,98.98)
};
\addlegendentry{\sf b-2}

\addplot [red,  mark size=1, mark=none, line width = 1pt] 
coordinates{
(0,5.64) (1,7.64) (2,8.97) (3,9.97) (4,10.77) (5,11.36) (6,11.82) (7,12.20) (8,16.28) (9,16.87) (10,17.34) (11,17.69) (12,17.96) (13,18.18) (14,18.34) (15,22.36) (16,24.57) (17,25.85) (18,26.77) (19,27.50) (20,28.00) (21,28.39) (22,28.71) (23,28.94) (24,29.12) (25,32.50) (26,42.50) (27,44.82) (28,46.40) (29,47.40) (30,48.14) (31,48.72) (32,49.14) (33,49.46) (34,49.72) (35,49.91) (36,50.07) (37,63.65) (38,65.65) (39,66.97) (40,67.97) (41,68.65) (42,69.17) (43,69.58) (44,69.88) (45,70.11) (46,70.31) (47,70.45) (48,70.56) (49,70.65) (50,70.72)
};
\addlegendentry{\sf BWT b-2}

\addplot [color8,  mark size=1, mark=none, line width = 1pt] coordinates{
(0,0) (1,2) 	(2,4.32) 	(3,5.94) 	(4,7.74) 	(5,9.18) 	(6,10.73) 	(7,12.06) 	(8,13.47) 	(9,14.73) 	(10,16.05) 	(11,17.27) 	(12,18.53) 	(13,19.72) 	(14,20.94) 	(15,26.4) 	(16,32.07) 	(17,34.96) 	(18,37.88) 	(19,40.08) 	(20,42.3) 	(21,44.14) 	(22,45.99) 	(23,47.61) 	(24,49.24) 	(25,50.72) 	(26,52.2) 	(27,53.58) 	(28,54.96) 	(29,56.26) 	(30,57.57) 	(31,58.82) 	(32,60.07) 	(33,61.28) 	(34,62.5) 	(35,63.69) 	(36,64.88) 	(37,70.53) 	(38,76.19) 	(39,79.16) 	(40,82.13) 	(41,84.38) 	(42,86.63) 	(43,88.51) 	(44,90.39) 	(45,92.04) 	(46,93.69) 	(47,95.19) 	(48,96.69) 	(49,98.08) 	(50,99.47)
};
\addlegendentry{\small \sf b-weight}

\addplot [yellow,  mark size=1, mark=none, line width = 1pt] coordinates{
(0,5.64) (1,7.64) (2,8.97) (3,9.91) (4,10.60) (5,11.13) (6,11.53) (7,11.84) (8,16.13) (9,16.70) (10,17.13) (11,17.46) (12,17.72) (13,17.92) (14,18.08) (15,22.10) (16,24.07) (17,25.36) (18,26.28) (19,26.95) (20,27.46) (21,27.84) (22,28.14) (23,28.37) (24,28.55) (25,32.01) (26,42.28) (27,44.56) (28,45.99) (29,46.99) (30,47.72) (31,48.26) (32,48.68) (33,49.00) (34,49.24) (35,49.43) (36,49.59) (37,63.53) (38,65.81) (39,67.24) (40,68.24) (41,68.97) (42,69.52) (43,69.93) (44,70.25) (45,70.50) (46,70.69) (47,70.84) (48,70.96) (49,71.05) (50,71.12)
};
\addlegendentry{\small \sf BWT-b-weight}


\end{axis}
\end{tikzpicture}
\end{center}
    \caption{Accumulated information content as a function of the size of the processed prefix, before and after applying {\sf BWT} on $T={\tt {(at)}^7{(cg)}^{11}{(at)}^7}$.}
    \label{exampleAcc}
\end{figure}
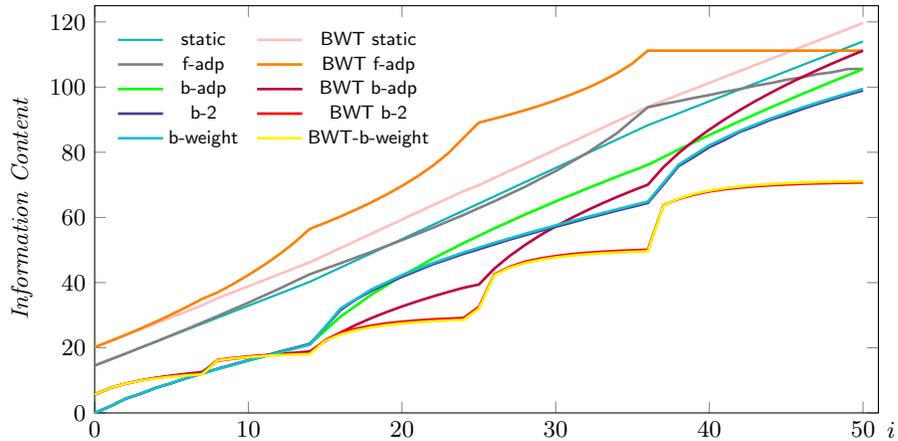






\section{Properties}

We show in this section that for the three first mentioned methods, {\sf static}, {\sf b-adp} and {\sf f-adp}, applying {\sf BWT}, or any other permutation, does not have any influence of the compression by arithmetic coding. The next section then brings empirical evidence that on the other hand, for the weighted methods, a pre-processing stage by {\sf BWT} does significantly improve the compression performance.

Arithmetic coding starts with an interval [0,1), and repeatedly narrows it as the text $T$ is being processed. The narrowing procedure is a proportional refinement of the present interval into sub-portions according to the probability distribution of the symbols of $\Sigma$. The encoding is a real number that can be selected randomly within the final interval. 
Static arithmetic coding uses a fixed probability distribution throughout the process for the interval partition, while the (backward) adaptive method updates the proportions for the corresponding partitions according to what has already been seen. 
It is well known that the static variant achieves the entropy of order 0. 
A straight\-forward corollary of this fact is that permuting $T$ does not change the {\it size\/} of the output, though the output file itself will of course be altered.
This is stated in the following lemma. 

\begin{lemma}
\label{static}
The size of the compressed file, after having applied {\sf static} arithmetic coding, is invariant under permutations of the original input.
\end{lemma}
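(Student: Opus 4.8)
The plan is to reduce the statement to the fact, already recalled in the text, that static arithmetic coding attains the order-0 entropy, combined with the observation that this entropy is a function of the symbol frequencies alone and is therefore blind to their arrangement. First I would fix the static model. Since static coding uses the constant weights $W(\mathds{1},\sigma,1,n)$, the probability assigned to a symbol $\sigma$ is $p_\sigma = n_\sigma/n$, where $n_\sigma = W(\mathds{1},\sigma,1,n)$ is the total number of occurrences of $\sigma$ in $T$. For any permutation $\pi$ of $\{1,\dots,n\}$, the rearranged string $T'$ with $T'[i]=T[\pi(i)]$ contains each symbol exactly $n_\sigma$ times, so the frequencies, and hence the entire distribution $(p_\sigma)_{\sigma\in\Sigma}$ and the static model itself, are identical for $T$ and $T'$.

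Next I would account for the output length. Processing $T$ by static arithmetic coding narrows the initial interval $[0,1)$ by a factor of $p_{T[i]}$ at step $i$, so the final interval has width $\prod_{i=1}^n p_{T[i]}$, and the ideal code length is the associated information content $-\log\prod_{i=1}^n p_{T[i]} = \sum_{i=1}^n \bigl(-\log p_{T[i]}\bigr)$, that is, the total of the per-position contributions $-\log p_i$ tabulated in the examples.

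The crux is then a regrouping. Because multiplication (equivalently, summation in the logarithmic domain) is commutative, I would collect the terms by symbol rather than by position:
$$\sum_{i=1}^n \bigl(-\log p_{T[i]}\bigr) = \sum_{\sigma\in\Sigma} n_\sigma\bigl(-\log p_\sigma\bigr) = n\sum_{\sigma\in\Sigma} p_\sigma\bigl(-\log p_\sigma\bigr) = n\,H_0(T).$$
Every ingredient on the right — the counts $n_\sigma$ and the probabilities $p_\sigma$ — is permutation-invariant, so the total code length coincides for $T$ and for any permuted $T'$, which is exactly the claim.

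The main obstacle is a modelling subtlety rather than an inequality: a concrete arithmetic coder emits an integral number of bits and may incur a rounding overhead of a few bits that depends on \emph{where} inside the final interval the emitted number lies, and this position does differ between $T$ and $T'$ even though the widths agree. I would dispose of it at the level at which the paper measures compression, namely the ideal length $-\log(\text{width})$: since $\prod_{i} p_{T[i]}$ is identical for all permutations, equal widths yield equal ideal lengths, and any fixed deterministic termination rule can be taken to round identically, leaving the conclusion intact.
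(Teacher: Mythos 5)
Your argument is correct and follows essentially the same route as the paper's proof: you compute the final interval width as the product $\prod_i p_{T[i]}$ of per-position narrowing factors and regroup the factors by symbol (equivalently, sum $-\log p_{T[i]}$ by symbol to get $n\,H_0$), concluding that $-\log r$ depends only on the occurrence counts. The only addition is your closing remark on integer rounding, which the paper sidesteps by measuring the output as the ideal length $-\log_2 r$ exactly as you do.
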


\begin{proof}

Suppose we are encoding the text $T[1,n]$ using the {\sf static} arithmetic method. 
The notation of the weight introduced above for $\sigma=T[i]$, $W(\mathds{1},T[i],1,n)$, refers to the number of occurrences of $T[i]$ within $T[1,n]$.
For simplicity, we shall use 
${\sf occ}(\sigma)$ to denote $W(\mathds{1},\sigma,1,n)$.

Each processed letter $T[i]$ narrows the current sub-interval of $[0,1)$ by a factor equal to the probability of $T[i]$ in $P$, that is, by $\frac{1}{n}W(\mathds{1},T[i],1,n)$. The range, $r$, of the final interval after processing $T=T[1,n]$ is the product of the sizes of these intervals: 

$$r  = \prod^n_{i=1}{\frac{1}{n}W(\mathds{1},T[i],1,n)} = \frac{1}{n^n}\displaystyle\prod _{\sigma \in \Sigma}\prod_{i=1}^{{\sf occ}(\sigma)}{{\sf occ}(\sigma)}=\frac{1}{n^n}\displaystyle\prod _{\sigma \in \Sigma}{\sf occ}(\sigma)^{{\sf occ}(\sigma)},$$
where the 
middle equality
is obtained by reordering the multiplication factors by character. 
The size of the compressed file is the information content of  choosing an interval of size $r$ given by $-\log_2 r$, and it is independent of the order in which the letters appear.
\end{proof}

A similar property can be proven for traditional adaptive arithmetic coding, as follows. 

\begin{lemma}\label{adp}
The size of the compressed file, after having applied {\sf adaptive} arithmetic coding, is invariant under permutations of the original input.
\end{lemma}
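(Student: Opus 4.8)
The plan is to mirror the argument of Lemma~\ref{static}, again writing the final interval range $r$ as a product of the per-symbol scaling factors and then showing that this product depends only on the multiset of symbols in $T$, not on their order. For the backward adaptive method {\sf b-adp}, each alphabet symbol is initialized with a single occurrence, so that after processing $i-1$ symbols the cumulative total {\sf TotIndx} equals $m+(i-1)$, where $m=|\Sigma|$. The sub-interval selected while encoding $T[i]$ then has width equal to the current estimated probability of $T[i]$, namely its accumulated count over $T[1,i-1]$ augmented by the initial unit, divided by this running total, i.e. $\frac{1 + W(\mathds{1},T[i],1,i-1)}{m+(i-1)}$.

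First I would separate the range into a denominator product and a numerator product,
\[
r = \prod_{i=1}^{n} \frac{1 + W(\mathds{1},T[i],1,i-1)}{m+(i-1)}
   = \frac{\prod_{i=1}^{n}\bigl(1 + W(\mathds{1},T[i],1,i-1)\bigr)}{\prod_{i=1}^{n}(m+i-1)},
\]
and then dispose of the denominator immediately: the product $\prod_{i=1}^{n}(m+i-1)=m(m+1)\cdots(m+n-1)$ is manifestly independent of the order of the letters, since each factor depends only on the index $i$ and not on which symbol occupies that position. Under any permutation the cumulative total still grows by exactly one at each step, so these factors are encountered in the same order regardless.

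The key step, and the only place where care is needed, is the numerator, which I would reorder by character exactly as in the proof of Lemma~\ref{static}. For a fixed symbol $\sigma$ occurring ${\sf occ}(\sigma)$ times, its $j$-th occurrence in processing order is always preceded by precisely $j-1$ earlier occurrences of $\sigma$, so it contributes the factor $1+(j-1)=j$, \emph{irrespective} of how the occurrences of $\sigma$ are interleaved with those of the other symbols. The product of $\sigma$'s contributions therefore collapses to $1\cdot 2\cdots{\sf occ}(\sigma)={\sf occ}(\sigma)!$, and the whole numerator becomes $\prod_{\sigma\in\Sigma}{\sf occ}(\sigma)!$. Both numerator and denominator thus depend only on the occurrence counts, which a permutation leaves unchanged, so $r$, and with it the compressed size $-\log_2 r$, is invariant. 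I expect the subtlety to lie solely in justifying that the interleaving of symbols is irrelevant to the numerator; once that observation is made, the factorial collapse and the conclusion follow at once.
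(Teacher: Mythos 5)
Your proof is correct and follows essentially the same route as the paper's: both express the final range as $r=\prod_{i=1}^{n}\frac{W(\mathds{1},T[i],1,i-1)+1}{m+i-1}$, observe that the denominator contributes the order-independent product $m(m+1)\cdots(m+n-1)=\frac{(m+n-1)!}{(m-1)!}$, and reorder the numerator by character so that the $j$-th occurrence of each $\sigma$ contributes the factor $j$, collapsing to $\prod_{\sigma\in\Sigma}{\sf occ}(\sigma)!$. Your explicit justification that the interleaving of symbols is irrelevant to the numerator is a welcome elaboration of a step the paper states only implicitly, but the argument is the same.
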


\begin{proof}

To avoid the zero-frequency problem for encoding the first occurrence of a letter $\sigma$ in $T$, the number of occurrences for each $\sigma$ is initialized by 1.
For adaptive arithmetic coding, {\sf b-adp}, each processed symbol $T[i]$, $1 < i \leq n$, narrows the current sub-interval of $[0,1)$, representing the processed {\it prefix\/} $T[1,i-1]$ of $T$ of size $i-1$, by a factor equal to the probability of $T[i]$ in $T[1,i-1]$. This probability is equal to  
$$\frac{W(\mathds{1},T[i],1,i-1)+1}{m+i-1},$$ 
where $m=|\Sigma|$ taking the initial 1-values of all the characters into account. 
Multiplying the sizes of these intervals contributed by the occurrences of $T[i]$, $1 \leq i \leq n$, we get

\begin{align*}
r & = \prod^n_{i=1}{\frac{W(\mathds{1},T[i],1,i-1)+1}{m+i-1}} = \Big (\prod^n_{i=1}{\frac{1}{m+i-1}} \Big )\prod _{\sigma \in \Sigma}\prod_{i=1}^{{\sf occ}(\sigma)}{i}\\ 
& = \frac{(m-1)!}{(m+n-1)!}\prod _{\sigma \in \Sigma}{\sf occ}(\sigma)!    
\end{align*}
    
The size of the compressed file is accordingly $-\log_2 r$, so that permuting the input text will not change the size of the output.
\end{proof}

A similar proposition can be proven for {\sf f-adp}, stated as follows.

\begin{lemma}\label{f-adp}
The size of the compressed file, after having applied forward arithmetic coding, {\sf f-adp}, is invariant under permutations of the original input.
\end{lemma}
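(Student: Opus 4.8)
The plan is to mirror the argument of Lemma~\ref{adp}, exploiting that {\sf f-adp} is the exact temporal reflection of {\sf b-adp}. First I would record how the interval narrows. Since {\sf f-adp} starts from the true frequencies and decrements the count of each symbol as it advances, processing $T[i]$ multiplies the current sub-interval of $[0,1)$ by the probability of $T[i]$ among the $n-i+1$ positions still to come, namely $\frac{W(\mathds{1},T[i],i,n)}{n-i+1}$. Note that, in contrast to {\sf b-adp}, no zero-frequency correction is required here, so the numerator carries no $+1$ and the denominator is simply the number of remaining positions $n-i+1$ rather than $m+i-1$.

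Next I would form the range $r$ of the final interval as the product over all positions and separate it into its denominator and numerator parts,
\begin{align*}
r = \prod_{i=1}^{n}\frac{W(\mathds{1},T[i],i,n)}{n-i+1}
= \Big(\prod_{i=1}^{n}\frac{1}{n-i+1}\Big)\prod_{i=1}^{n}W(\mathds{1},T[i],i,n).
\end{align*}
The denominator is $\prod_{i=1}^{n}(n-i+1)=n!$. For the numerator I would regroup the factors by character exactly as in the backward case: fixing a symbol $\sigma$ and scanning its occurrence positions from left to right, the forward count $W(\mathds{1},\sigma,\cdot,n)$ takes the successive values ${\sf occ}(\sigma),{\sf occ}(\sigma)-1,\ldots,1$, since each occurrence of $\sigma$ we pass leaves exactly one fewer copy ahead of us. Hence the total contribution of $\sigma$ is ${\sf occ}(\sigma)!$, and
\begin{align*}
r = \frac{1}{n!}\prod_{\sigma\in\Sigma}{\sf occ}(\sigma)!.
\end{align*}

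Finally I would conclude that the compressed size $-\log_2 r$ depends only on $n$ and on the multiset $\{{\sf occ}(\sigma)\}_{\sigma\in\Sigma}$, neither of which changes when the letters of $T$ are reordered; the output size is therefore invariant under any permutation, {\sf BWT} included. I do not expect a genuine obstacle, as the decomposition is forced once the decrement structure is noticed. The only point needing care is the claim that, for a fixed $\sigma$, the forward counts at its occurrence positions run through ${\sf occ}(\sigma),\ldots,1$ \emph{independently} of where those positions lie; this is precisely the mirror image of the increasing sequence $1,\ldots,{\sf occ}(\sigma)$ encountered in {\sf b-adp}, and it is what renders the product order-free.
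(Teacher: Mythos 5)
Your proposal is correct and follows essentially the same route as the paper: the same narrowing factor $\frac{W(\mathds{1},T[i],i,n)}{n-i+1}$, the same split of the product into $\frac{1}{n!}$ times a per-character regrouping yielding $\prod_{\sigma\in\Sigma}{\sf occ}(\sigma)!$, and the same conclusion that $-\log_2 r$ depends only on $n$ and the occurrence counts. The only difference is that you make explicit the observation that the forward counts of a fixed $\sigma$ descend through ${\sf occ}(\sigma),\ldots,1$ regardless of where its occurrences lie, which the paper leaves implicit in the reindexed product $\prod_{i=1}^{{\sf occ}(\sigma)} i$.
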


\begin{proof}

By a similar argument to Lemma~\ref{adp}, each processed letter $T[i]$, $1 < i \leq n$ narrows the current sub-interval, representing the processed {\it suffix\/} $T[i,n]$ of $T$ of size $n-i+1$, by a factor equal to the probability of $T[i]$ in $T[i,n]$, which is equal to  
$$\frac{W(\mathds{1},T[i],i,n)}{n-i+1}.$$ 
Multiplying the sizes of these intervals contributed by the occurrences of $T[i]$, $1 \leq i \leq n$,


$$r = \prod^n_{i=1}{\frac{W(\mathds{1},T[i],i,n)}{n-i+1}} = \Big (\prod^n_{i=1}{\frac{1}{n-i+1}} \Big )\prod _{\sigma \in \Sigma}\prod_{i=1}^{{\sf occ}(\sigma)}{i} = \frac{1}{n!}\prod _{\sigma \in \Sigma}{\sf occ}(\sigma)!$$
As before, the size of the compressed file is $-\log_2 r$, which is not affected by permutation of the input file.
\end{proof}

Besides proving that 
the order of the characters does not matter, only their quantity, 
we also get an exact evaluation of the difference in size between the encoded files: 
\begin{corollary}\label{forward}
The {\it forward looking} encoding, {\sf f-adp}, is better than 
the {\it backward looking} encoding, {\sf b-adp} by
$\log{m+n-1\choose{n}}$ bits.

\end{corollary}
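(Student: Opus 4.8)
The plan is to read off the two closed-form expressions for the range $r$ of the final arithmetic-coding interval that were already derived in Lemmas~\ref{adp} and~\ref{f-adp}, and simply to compare them. Recall that an arithmetic coder producing a final interval of size $r$ emits $-\log_2 r$ bits, so the difference in output size between the two schemes is just the logarithm of the ratio of their respective $r$-values. The comparison is clean precisely because both lemmas count occurrences over the \emph{entire} file, so the multiset of counts $\{{\sf occ}(\sigma) : \sigma\in\Sigma\}$ is identical in the two formulas.

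First I would write, using $r_b$ for the {\sf b-adp} range and $r_f$ for the {\sf f-adp} range,
$$r_b = \frac{(m-1)!}{(m+n-1)!}\prod_{\sigma\in\Sigma}{\sf occ}(\sigma)!, \qquad r_f = \frac{1}{n!}\prod_{\sigma\in\Sigma}{\sf occ}(\sigma)!.$$
The key observation is that the factor $\prod_{\sigma\in\Sigma}{\sf occ}(\sigma)!$ is common to both expressions, so it cancels in the quotient. What remains is a purely combinatorial prefactor, which collapses to a binomial coefficient:
$$\frac{r_f}{r_b} = \frac{1/n!}{(m-1)!/(m+n-1)!} = \frac{(m+n-1)!}{n!\,(m-1)!} = \binom{m+n-1}{n}.$$

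Finally I would translate this back into bits. The saving of {\sf f-adp} over {\sf b-adp} is $(-\log_2 r_b) - (-\log_2 r_f) = \log_2\frac{r_f}{r_b} = \log_2\binom{m+n-1}{n}$, which is exactly the claimed quantity; since $\binom{m+n-1}{n}\ge 1$, this is nonnegative, confirming that the forward method is never worse and quantifying its advantage. There is essentially no obstacle here: the only care needed is the bookkeeping of the common factorial factor and the elementary identity $\frac{(m+n-1)!}{n!\,(m-1)!}=\binom{m+n-1}{n}$, together with keeping the sign convention straight so that the nonnegativity of the logarithm correctly reflects an advantage of {\sf f-adp} rather than a loss. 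I would also note as an aside that this net gain coincides with the header cost that {\sf f-adp} must prepend to transmit the exact frequencies, which is what makes the \emph{total} costs of the two methods equal in Table~\ref{Summerize}; that observation, however, lies just beyond the present statement.
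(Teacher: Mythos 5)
Your proposal is correct and follows essentially the same route as the paper's own proof: both take the closed forms of $r_b$ and $r_f$ from Lemmas~\ref{adp} and~\ref{f-adp}, cancel the common factor $\prod_{\sigma\in\Sigma}{\sf occ}(\sigma)!$, and identify the remaining ratio $\frac{(m+n-1)!}{n!\,(m-1)!}$ as $\binom{m+n-1}{n}$ before taking logarithms. Your added remarks on nonnegativity and on the coincidence with the header cost are consistent with the paper's subsequent discussion.
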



\begin{proof}
We use $r_b$ and $r_{\!f}$ to denote the size of the final ranges of {\sf b-adp} and {\sf f-adp}, given by Lemma~3 and Lemma~4, respectively. 
The difference between the sizes of the compressed files of {\sf b-adp} and {\sf f-adp} is as follows.

\begin{align*}
    -\log r_b+\log r_{\!f} & = \log{\Big( \frac{1}{n!}\prod _{\sigma \in \Sigma}{\sf occ}(\sigma)! \Big)} -\log{\Big(\frac{(m-1)!}{(m+n-1)!}\prod _{\sigma \in \Sigma}{\sf occ}(\sigma)!\Big)}\\
     & = \log{\Big( \frac{(m+n-1)!}{n! \cdot (m-1)!}}\Big)
     = \log{m+n-1\choose{n}}
\end{align*}

\vspace*{-5mm}
\end{proof}

\vspace*{3mm}
Interestingly, the difference between {\sf b-adp} and {\sf f-adp} only depends on the size of the alphabet and the size of the file and is blind to the content itself.
In fact, on our datasets with $n=4M$ and $m=257$, including the {\sf EOF} sign, the difference between the compressed files, discarding the prelude was the constant 494 bytes, as expected. 
However, as mentioned above, the prelude for {\sf f-adp} is more costly than that for {\sf b-adp}, as it 
should include the exact frequencies of the characters. 

The number of sequences $\Big({\sf occ(\sigma_1)}, {\sf occ(\sigma_2)}, \ldots, {\sf occ(\sigma_m)}\Big)$ such that\\ 
$\sum_{i=1}^m {{\sf occ}(\sigma_i)}=n$ is equal to the number of ways to select $n$ numbers out of $n+m-1$, i.e., $n+m-1\choose{n}$.


The information content of selecting one of these choices, 
$\log {n+m-1\choose{n}}$, 
coincides therefore with the difference between {\sf f-adp} and {\sf b-adp}, which shows that basically there is no much difference between their compression efficiency. 




\section{Experimental results}

In order to study the performance of the weighted methods on {\sf BWT} transformed texts on real-life, rather than artificial data, we considered the datasets from the Pizza \& Chili corpus\footnote{\sf http://pizzachili.dcc.uchile.cl/}, a collection of  files of different nature and alphabets. 
We compared all methods before and after {\sf BWT} has been applied. 

\subsection{Compression Performance}

\begin{table}[!ht]
    \centering
\begin{center}
\begin{tabular}
{@{\hspace{0mm}}l@{\hspace{3mm}}c@{\hspace{3mm}}c@{\hspace{3mm}}|c@{\hspace{3mm}}c@{\hspace{3mm}}c@{\hspace{3mm}}c@{\hspace{3mm}}|c@{\hspace{3mm}}c@{\hspace{4mm}}c@{\hspace{3mm}}c@{\hspace{0mm}}}
\toprule
& {\sf static} &  {\sf adp} & \multicolumn{4}{c}{\sf b-2} & \multicolumn{4}{c}{\sf b-weight} \\ [1mm]
& & & $T$ & (k) & ${\sf BWT}$ & (k)& $T$ & (k) & ${\sf BWT}$ & (k) \\ [1mm]
\toprule
{\sc sources} & 69.48 & 69.48 & 64.20 & (444) & 26.44 & (24) & 64.17 & (438) & 26.30 & (22) \\			
{\sc xml} & 65.54 & 65.54 & 65.27 & (9355) & 13.11 & (24) & 65.27 & (8799) & 13.05 & (24) \\		
{\sc dna} & 24.97 & 24.98 & 24.64 & (62) & 23.02 & (34) & 24.64 & (60) & 23.01 & (34)\\		
{\sc english} & 57.07 & 57.07 & 56.55 & (4352) & 29.41 & (36) & 56.54 & (3230) & 29.33 & (36)\\			
{\sc pitches} & 69.32 & 69.32 & 56.17 & (85) & 53.34 & (119) & 56.09 & (83) & 53.26 & (120) \\			
{\sc proteins} & 52.64 & 52.65 & 51.69 & (329) & 46.71 & (39) & 51.68 & (317) & 46.62 & (38) \\			
\bottomrule
\end{tabular}
\end{center}
\vspace*{-3mm}
\caption{Compression performance on {\sf BWT} transformed files (\%) of the different methods.}
\label{Compression}
\vspace*{-3mm}
\end{table}

For our first experiments we used a 4MB prefix of each of our dataset files.
Table~\ref{Compression} shows the compression performance on the original and transformed files, defined as the size of the compressed file divided by the size of the original file, in percent. 
The column headed {\sf adp} refers to both {\sf b-adp} and {\sf f-adp}, which give identical results, as expected.
As {\sf static} and {\sf adp} are not affected by {\sf BWT} by Lemma~\ref{static} and Lemma~\ref{adp}, their results are reported only once in the first and second column, respectively. 
The following four columns refer to {\sf b-2} and the last four columns to {\sf b-weight}. The values of $k$ used to achieve the results of {\sf b-2} and {\sf b-weight} are displayed in parentheses.

As can be seen, {\sf b-2} and {\sf b-weight} outperform the traditional variants on all datasets even on the original file $T$. While they are better by up to 5\% when applied on the original file, they become better by more than 40\% in some of the cases, when applied on the {\sf BWT} transformed file. More precisely, {\sf b-2} is better than the traditional variants by 24.50\% on the average, while {\sf b-weight} is better by 24.58\% on the average. In fact, {\sf b-weight} outperforms {\sf b-2} by 0.11\% on average on all files.


\subsection{Iterative {\sf BWT}}

The {\sf BWT} 
tends to rearrange the symbols of the text so that longer runs are formed than in the original file. 
Since applying {\sf BWT} turned out to be beneficial for compressing afterwards according to methods based on the weighted paradigms, this naturally raises the question whether re-applying {\sf BWT} iteratively will arrange the symbols even better in this regard. In other words, given a text $T$, we shall try our methods on 
{\sf BWT}$^i(T)$, for $i\ge 0$, in which $i=0$ refers to  the original file $T$ itself.

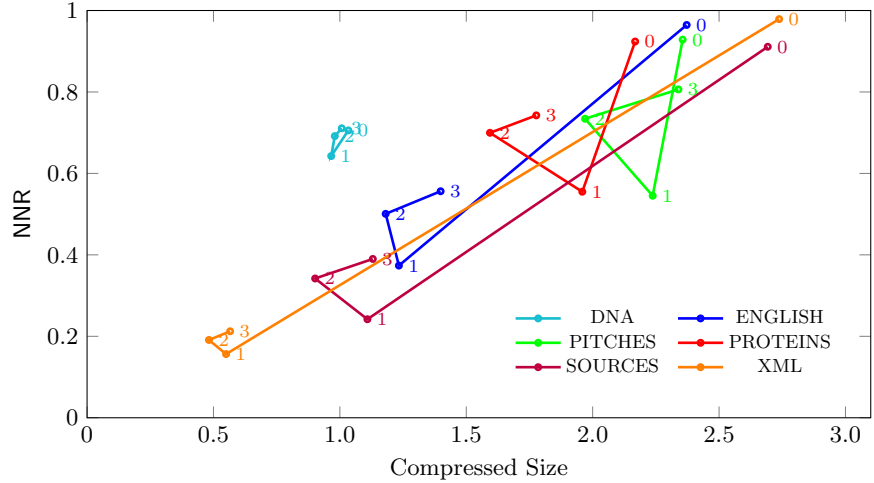
\begin{figure} [!ht]
    \centering
\begin{center}
\pgfplotsset{compat= newest,   
width=12cm, height=7cm,
}

    \centering

    \begin{tikzpicture}
\definecolor{color1}{rgb}{1,0.498039215686275,0.0549019607843137}
\definecolor{color2}{rgb}{0.172549019607843,0.627450980392157,0.172549019607843}
\definecolor{color3}{rgb}{0.83921568627451,0.152941176470588,0.156862745098039}
\definecolor{color4}{rgb}{0.580392156862745,0.403921568627451,0.741176470588235}
\definecolor{color5}{rgb}{0.549019607843137,0.337254901960784,0.294117647058824}
\definecolor{color6}{rgb}{0.890196078431372,0.466666666666667,0.76078431372549}
\definecolor{color7}{rgb}{0.737254901960784,0.741176470588235,0.133333333333333}
\definecolor{color8}{rgb}{0.0901960784313725,0.745098039215686,0.811764705882353}
    \begin{axis}[scaled ticks=false, 
    tick label style={/pgf/number format/fixed}, 
	xtick={0, 500000, 1000000, 1500000, 2000000, 2500000, 3000000},
	xticklabels={0,0.5,1.0,1.5,2.0,2.5,3.0},
	ytick = {0, 0.2, ..., 1},
	xmin=0 , xmax =3100000,
	ymin=0,   ymax=1,
	xlabel={\small Compressed Size},
   	ylabel={\small \sf NNR},
    mark size=0,
    legend columns=2, 
    /tikz/column 2/.style={
                column sep=5pt,
            },
    legend style={nodes={scale=0.8, transform shape}, fill opacity=0.8, 
draw opacity=1, text opacity=1,
    at={(0.75,0.08)}, anchor=south, draw=none},
    legend style={draw=none},
    nodes near coords={\coordindex},
    nodes near coords align=horizontal,
    nodes near coords style={ anchor=west, font=\scriptsize    },
]

\addplot [color8,  mark size=1, mark=o, line width = 1pt] coordinates{
 (1033575,0.705610037)
 (965626,0.642575979) (979970,0.69165206) (1007484,0.710791588)
};
\addlegendentry{\small DNA}

\addplot [blue,  mark size=1, mark=o, line width = 1pt] coordinates{
  (2371893,0.964485407)[0]
  (1233486,0.373656034)[1] (1181229,0.500733376)[2] (1398551,0.556189537)[3]
};
\addlegendentry{\small ENGLISH}

\addplot [green,  mark size=1, mark=o, line width = 1pt] coordinates{
  (2355922,0.928472757)[0]
  (2237325,0.54515624)[1] (1969853,0.734207392)[2] (2339105,0.80625701)[3]
};
\addlegendentry{\small PITCHES}

\addplot [red,  mark size=1, mark=o, line width = 1pt] coordinates{
  (2167989,0.923844099)
  (1958990,0.554989576) (1593788,0.699685574) (1776879,0.742494583)
};
\addlegendentry{\small PROTEINS}

\addplot [purple,  mark size=1, mark=o, line width = 1pt] coordinates{
  (2692578,0.910964489)
  (1108953,0.241897106) (902172,0.341999292) (1130313,0.389975548)
};
\addlegendentry{\small SOURCES}

\addplot [orange,  mark size=1, mark=o, line width = 1pt] coordinates{
    (2737659,0.978893995)
    (549819,0.156415462) (481978,0.191024065) (565947,0.212334156)
};
\addlegendentry{\small XML}

\end{axis}
\end{tikzpicture}
\end{center}
    \caption{File size in million bytes and {\sf NNR}.}
    \label{RL1}
\end{figure}

To quantify this intuitive notion of {\it being more ordered\/}, we suggest the following measure. 
A {\it run\/} is a 
sequence of maximal length of the same symbol. 
For example, the string {\sf aaabaacccc} consists of 4 runs, of lengths 3, 1, 2 and 4, respectively.  
The number of runs in the text $T$, divided by its length $n$, may serve as a measure of its orderedness; we call it the normalized number of runs, {\sf NNR}, which is between 0 and 1.
The rationale is that the larger the number of runs in the input file, the shorter they are, consequently causing a longer encoding, and vice versa. 

The following graph displays, for each of the input files,  the compression performance on the $x$-axis and the {\sf NNR} on the $y$-axis of {\sf b-2} when applied on {\sf BWT}$^i(T)$, for $i=0, 1, 2, 3$. For better visibility, the points corresponding to a given input file appear in the same color and are connected and numbered by $i$, the number of times {\sf BWT} has been applied. As can be seen, the compression efficiency improves for $i=1$ and 2, but becomes worse when {\sf BWT} is applied a third time on all of the input files. The smallest {\sf NNR} is always achieved for $i=1$, and it increases for higher $i$. 




\subsection{Weighted {\sf BWT} Block Variant}

Some drawbacks of the {\sf BWT} are its streamless nature and its processing time cost.  
These weaknesses stem from the fact that the entire text $T$ should be known for the permuting process. 
To improve execution time, we suggest a block based variant and examine the block-size vs.\ compression performance trade-off. Figure \ref{Block-b2} 
displays the compression efficiency as a function of the block size for {\sf b-2}. 
The results for {\sf b-weight} produce practically the same plot. As expected, larger blocks yield better compression for textual input, whereas for special data like {\sc dna} or {\sc pitches}, the block size has only a minor influence.

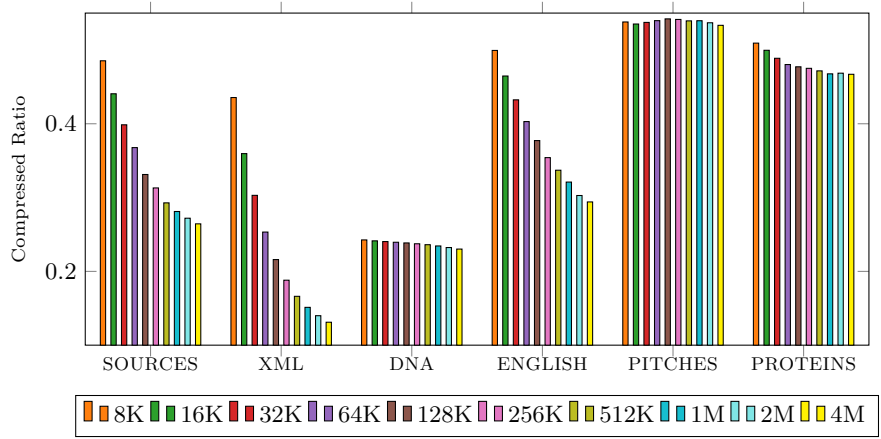
\begin{figure}[!ht]
    \centering
\begin{center}
\pgfplotsset{compat= newest,   
width=12cm, height=6cm,
}
    \centering

    \begin{tikzpicture}
\definecolor{color1}{rgb}{1,0.498039215686275,0.0549019607843137}
\definecolor{color2}{rgb}{0.172549019607843,0.627450980392157,0.172549019607843}
\definecolor{color3}{rgb}{0.83921568627451,0.152941176470588,0.156862745098039}
\definecolor{color4}{rgb}{0.580392156862745,0.403921568627451,0.741176470588235}
\definecolor{color5}{rgb}{0.549019607843137,0.337254901960784,0.294117647058824}
\definecolor{color6}{rgb}{0.890196078431372,0.466666666666667,0.76078431372549}
\definecolor{color7}{rgb}{0.737254901960784,0.741176470588235,0.133333333333333}
\definecolor{color8}{rgb}{0.0901960784313725,0.745098039215686,0.811764705882353}
\definecolor{color9}{rgb}{0.5, 0.9, 0.9}
\begin{axis}[
    ybar,
    bar width=2pt,
    legend style={at={(0.5,-0.15)},
      anchor=north,legend columns=-1},
    ylabel={\scriptsize Compressed Ratio},
    symbolic x coords={SOURCES, XML, DNA, ENGLISH, PITCHES, PROTEINS},
    xtick=data,
    xticklabel style={font=\sc, font=\scriptsize\scshape, text height=0.5ex},
    nodes near coords align={vertical},
	ymin=0.1,   ymax=0.55,
    ]
    
\addplot [draw=black, fill=color1]
coordinates {(SOURCES,0.485353946685791) (XML,0.435561180114746) (DNA,0.242614984512329) (ENGLISH,0.49938440322876) (PITCHES,0.537954568862915) (PROTEINS,0.509284496307373)};
\addlegendentry{\small 8K}

\addplot [draw=black, fill=color2] coordinates {(SOURCES,0.440658092498779) (XML,0.359561681747437) (DNA,0.241359710693359) (ENGLISH,0.464730262756348) (PITCHES,0.535180807113647) (PROTEINS,0.49959659576416)};
\addlegendentry{\small 16K}

\addplot [draw=black, fill=color3] coordinates {(SOURCES,0.398607492446899) (XML,0.30302095413208) (DNA,0.24038290977478) (ENGLISH,0.432426929473877) (PITCHES,0.537375211715698) (PROTEINS,0.488755702972412)};
\addlegendentry{\small 32K}

\addplot [draw=black, fill=color4] coordinates {(SOURCES,0.367709398269653) (XML,0.25326132774353) (DNA,0.239452123641968) (ENGLISH,0.402979850769043) (PITCHES,0.539740562438964) (PROTEINS,0.48026180267334)};
\addlegendentry{\small 64K}

\addplot [draw=black, fill=color5] coordinates {(SOURCES,0.331275701522827) (XML,0.215987920761108) (DNA,0.23853588104248) (ENGLISH,0.377230405807495) (PITCHES,0.54215121269226) (PROTEINS,0.477210521697998)};
\addlegendentry{\small 128K}

\addplot [draw=black, fill=color6] coordinates {(SOURCES,0.313056707382202) (XML,0.188040494918823) (DNA,0.23739767074585) (ENGLISH,0.35413122177124) (PITCHES,0.541386842727661) (PROTEINS,0.475127696990967)};
\addlegendentry{\small 256K}

\addplot [draw=black, fill=color7] coordinates {(SOURCES,0.29283618927002) (XML,0.166199207305908) (DNA,0.236143112182617) (ENGLISH,0.337073802947998) (PITCHES,0.539442777633666) (PROTEINS,0.47169303894043)};
\addlegendentry{\small 512K}

\addplot [draw=black, fill=color8] coordinates {(SOURCES,0.281133413314819) (XML,0.151249408721924) (DNA,0.234441757202148) (ENGLISH,0.321006059646606) (PITCHES,0.539646625518798) (PROTEINS,0.467658758163452)};
\addlegendentry{\small 1M}

\addplot [draw=black, fill=color9] coordinates {(SOURCES,0.272042512893677) (XML,0.139930009841919) (DNA,0.232287168502808) (ENGLISH,0.302802562713623) (PITCHES,0.536862611770629) (PROTEINS,0.468557834625244)};
\addlegendentry{\small 2M}

\addplot [draw=black, fill=yellow] coordinates {(SOURCES,0.264394998550415) (XML,0.131087064743042) (DNA,0.230223178863525) (ENGLISH,0.29408597946167) (PITCHES,0.533419847488403) (PROTEINS,0.46705961227417)};
\addlegendentry{\small 4M}

\end{axis}
\end{tikzpicture}
\end{center}
    \caption{Compression efficiency of {\sc b-2} on {\sf BWT} as a function of block size.}
    \label{Block-b2}
\end{figure}



\section{Conclusion}

This paper studies the compression performance of the weighted coding on Burrows-Wheeler transformed files.
We have shown that statistical methods which treat all positions in the files evenly are indifferent to permutations in the input file, and to {\sf BWT} in particular. On the other hand, the weighted approach, being more suitable to skewed files, has been shown empirically to gain additional savings when applied to the {\sf BWT} form of the text. 

\newpage  
\bibliography{WeightedBWT}




\end{document}